\newtheorem{thm}{Theorem}%[section]
\newtheorem{conj}[thm]{Conjecture}
\newtheorem{lem}[thm]{Lemma}
\newcommand{\edom}{\gamma^\infty}
\begin{document}

%%
%% Start line numbering here if you want
%%

\title{Eternal domination on prisms of graphs}

%% use the tnoteref command within \title for footnotes;
%% use the tnotetext command for the associated footnote;
%% use the fnref command within \author or \address for footnotes;
%% use the fntext command for the associated footnote;
%% use the corref command within \author for corresponding author footnotes;
%% use the cortext command for the associated footnote;
%% use the ead command for the email address,
%% and the form \ead[url] for the home page:
%%
%% \title{Title\tnoteref{label1}}
%% \tnotetext[label1]{}
%% \author{Name\corref{cor1}\fnref{label2}}
%% \ead{email address}
%% \ead[url]{home page}
%% \fntext[label2]{}
%% \cortext[cor1]{}
%% \address{Address\fnref{label3}}
%% \fntext[label3]{}

%% use optional labels to link authors explicitly to addresses:
%% \author[label1,label2]{<author name>}
%% \address[label1]{<address>}
%% \address[label2]{<address>}

\author[1]{Aaron Krim-Yee}
\affil[1]{Department of Bioengineering, McGill University, Montreal, QC, Canada}
\author[2,3]{Ben Seamone}
\affil[2]{Mathematics Department, Dawson College, Montreal, QC, Canada}
\affil[3]{D\'{e}partement d'informatique et de recherche op\'{e}rationnelle, Universit\'{e} de Montr\'{e}al, Montreal, QC, Canada}
\author[4]{Virg\'elot Virgile}
\affil[4]{D\'epartement de math\'ematiques et de statistique, Universit\'{e} de Montr\'{e}al, Montreal, QC, Canada}
\maketitle

\begin{abstract}
%% Text of abstract
An eternal dominating set of a graph $G$ is a set of vertices (or ``guards'') which dominates $G$ and which can defend any infinite series of vertex attacks, where an attack is defended by moving one guard along an edge from its current position to the attacked vertex.  The size of the smallest eternal dominating set is denoted $\gamma^\infty(G)$ and is called the eternal domination number of $G$.  In this paper, we answer a conjecture of Klostermeyer and Mynhardt [Discussiones Mathematicae Graph Theory, vol. 35, pp. 283-300], showing that there exist there are infinitely many graphs $G$ such that $\gamma^\infty(G)=\theta(G)$ and $\gamma^\infty(G \Box K_2)<\theta(G \Box K_2)$, where $\theta(G)$ denotes the clique cover number of $G$.
\end{abstract}

%% main text
\section{Introduction}

Let $G = (V,E)$ be a graph, which is assuming throughout to be simple and finite.  A set $X \subseteq V(G)$ is called a \textit{dominating set} if for every $u \in V(G) \setminus X$ there exists an $x \in X$ such that $ux \in E(G)$.  Consider the following graph security model.  A set of guards begins by occupying a dominating set $X$ in a graph $G$, and must respond to an infinite sequence of attacks.  By this, we mean that after a vertex $u \notin X$ is chosen by an attacker, one guard which is adjacent to $u$ must move from its current position to $u$; necessarily, the resulting set of positions must still be a dominating set of $G$.  If $k$ guards can move in this way to respond to any infinite sequence of attacks, then we say $G$ can be \textit{eternally $k$-guarded}.  The minimum $k$ for which $G$ can be eternally $k$-guarded is called the \textit{eternal domination number} of $G$, which is denoted $\edom(G)$.

The study of $\edom(G)$ was introduced in \cite{burger2004infinite}, where (among other topics) its relation to other graph parameters was studied.  Recall that an \textit{independent set} of $G$ is a set of pairwise non-adjacent vertices, and $\alpha(G)$, called the \textit{independence number} of $G$, is the maximum cardinality of an independent set in $G$.  A \textit{clique} of $G$ is a set of pairwise adjacent vertices.  A \textit{clique cover} of $G$ is a set $\{X_1,\ldots,X_k\}$ for which each $X_i$ is a clique and $\cup_{i=1}^kX_i = V(G)$.  The cardinality of a minimum clique cover of $G$ is called the \textit{clique cover number} of $G$ and is denoted $\theta(G)$.  It was shown in \cite{burger2004infinite} that, for any graph $G$, $\alpha(G) \leq \gamma^\infty(G) \leq \theta(G)$.  Many open questions remain regarding eternal domination with respect to these bounding parameters, and in particular whether or not particular constructions force $\edom(G)$ to be equal to one of them (the survey \cite{survey} provides a nice overview of what is and is not known about eternal domination and its variants).  

We make use of two graph binary operations on graphs in this paper.  The \textit{join} of two graphs $G$ and $H$, denoted $G \vee H$, is the graph obtained by adding all possible edges between $G$ and $H$.  The \textit{Cartesian product} of two graphs $G$ and $H$, denoted $G \Box H$, has $V(G \Box H) = V(G) \times V(H)$ and an edge between $(u,v)$ and $(x,y)$ if and only if either $u=x$ and $vy \in E(H)$ or $v=y$ and $ux \in E(G)$.
The product $G \Box K_2$ ($K_k$ denotes the complete graph on $k$ vertices) is called the \textit{prism} of $G$; one can informally think of the prism of $G$ as the graph obtained by taking two copies of $G$ and adding a matching between corresponding vertices.
In \cite{KM15}, the following conjecture is put forward:
\begin{conj}\cite{KM15} \label{theconj}
	If $G$ is a graph such that $\edom(G)=\theta(G)$, then $\edom(G \Box K_2)=\theta(G \Box K_2)$.
\end{conj}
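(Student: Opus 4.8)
Because the abstract announces that Conjecture~\ref{theconj} is \emph{false}, the goal is not to prove it but to refute it, so the plan is to construct an infinite family of counterexamples: graphs $G$ with $\edom(G)=\theta(G)$ yet $\edom(G\Box K_2)<\theta(G\Box K_2)$. My first move is to reduce ``infinitely many'' to ``one.'' Since $\alpha$, $\theta$, and $\edom$ are all additive over disjoint unions (for $\edom$ this is because no guard can ever cross between components, so the guard count in each component is invariant and each component must be guarded independently), and since $(G_1\cup G_2)\Box K_2=(G_1\Box K_2)\cup(G_2\Box K_2)$, taking $k$ disjoint copies of a single counterexample multiplies every relevant quantity by $k$: it preserves $\alpha=\theta$ (hence $\edom=\theta$) on the base graph and scales the strict prism gap by $k$. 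Thus it suffices to exhibit one gadget $G$ with the two required properties.

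To guarantee the hypothesis $\edom(G)=\theta(G)$, the cheapest sufficient condition is $\alpha(G)=\theta(G)$ (satisfied, for instance, by every perfect graph), since then the sandwich $\alpha(G)\le\edom(G)\le\theta(G)$ from \cite{burger2004infinite} collapses; if no such base happens to yield a prism gap, I would instead establish $\edom(G)=\theta(G)$ directly for a tailored $G$ with $\alpha(G)<\theta(G)$. The real content is then to control the prism, and here the decisive structural observation is that every clique of $G\Box K_2$ lies entirely in one of the two copies of $G$ or else is a single matching edge $\{(u,0),(u,1)\}$; splitting a clique cover according to which columns are covered by matching edges yields
\[
\theta(G\Box K_2)=\min_{S\subseteq V(G)}\Bigl(|S|+2\,\theta\bigl(G-S\bigr)\Bigr).
\]
I would use this formula to pin $\theta(G\Box K_2)$ down from below, choosing $G$ so that $\theta(G-S)$ stays large as vertices are deleted (so that neither the ``all matching edges'' cover of size $|V(G)|$ nor the ``two layers'' cover of size $2\theta(G)$ can be beaten by much), while simultaneously arranging $\alpha(G\Box K_2)<\theta(G\Box K_2)$, i.e.\ forcing the prism to be imperfect.

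The decisive and hardest step is then to beat this clique-cover value with an explicit eternal strategy on $G\Box K_2$. The generic bound $\edom\le\theta$ is realised by keeping one guard per clique and never leaving that clique; to do strictly better I must exploit the matching edges, letting guards shuttle between the two layers so that a single guard can take responsibility for vertices a static clique cover is forced to split. Concretely I would fix a family $\mathcal{F}$ of dominating guard configurations, each of size $\theta(G\Box K_2)-1$, and prove that $\mathcal{F}$ is closed under attack-and-response: for every configuration in $\mathcal{F}$ and every attacked vertex there is an adjacent guard whose move leaves a configuration that again dominates the prism and again lies in $\mathcal{F}$. Producing such an invariant family, and verifying closure over all attack types (attacks answered within a layer versus attacks answered across a matching edge), is where the argument lives and the step I expect to be the main obstacle; by contrast the clique-cover lower bound and the disjoint-union bootstrapping are comparatively routine once a successful gadget is in hand.
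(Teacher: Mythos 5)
You have correctly identified that the task is refutation by counterexample, and several of your preliminary observations are sound: the disjoint-union reduction from ``one counterexample'' to ``infinitely many'' is valid (and in fact the paper does not even need it, since it produces a connected counterexample for every $k \geq 2$ with $\alpha=\edom=\theta=\binom{k+1}{2}+1$); your sufficient condition $\alpha(G)=\theta(G)$ for the hypothesis $\edom(G)=\theta(G)$ is exactly the one the paper exploits; and your structural description of cliques in $G \Box K_2$ (each lies in one layer or is a matching edge) is the same observation underlying the paper's Lemma~\ref{Lem:G-Box-K2}. But the proposal stops precisely where the proof has to begin: no gadget is exhibited, and you explicitly defer the ``decisive and hardest step'' --- producing an invariant family of guard configurations of size $\theta(G\Box K_2)-1$ --- to future work. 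As written, this is a plan, not a proof, and the missing piece is the entire mathematical content.

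Moreover, the step you expect to be the main obstacle (a bespoke eternal strategy on the prism that shuttles guards across matching edges) is not where the difficulty actually lies, and the paper avoids it entirely. The paper's key idea is to manufacture the prism gap \emph{in the base graph before the final padding}: it starts from $H=\overline{\mathcal{M}_k^{\binom{k+1}{2}+1}}$, the complement of an iterated Mycielskian, which has $\alpha(H)=k$ but $\theta(H)=\binom{k+1}{2}+1$, so Lemma~\ref{Lem:Alpha-Choose} gives $\edom(H)\leq\binom{k+1}{2}<\theta(H)$. Joining with $\overline{K}_{\binom{k+1}{2}}$ (Lemma~\ref{Lem:GraphJoin}) yields $H^*$ with $\edom(H^*)=\theta(H^*)-1$, and attaching one pendant vertex raises $\alpha$ to $\theta$, forcing $\edom(G)=\theta(G)$. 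On the prism, each layer of $H^*$ is then guarded \emph{independently} by its own $\theta(H^*)-1$ guards and the pendant pair $\{(x,1),(x,2)\}$ by one more, giving $\edom(G\Box K_2)\leq 2\theta(G)-1$ with no cross-layer movement at all; the only delicate point on the clique-cover side is using vertex-criticality of the Mycielskian to ensure $G$ admits a minimum clique cover with no singletons, so that $\theta(G\Box K_2)=2\theta(G)$. Without some analogue of this mechanism --- a base graph where $\edom$ provably falls strictly below $\theta$, later disguised so that $\edom=\theta$ holds for $G$ itself --- your outline gives no reason to believe a suitable gadget exists, so the gap is fatal rather than cosmetic.
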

The purpose of this paper is to present a construction of an infinite family of counterexamples to this conjecture.

\section{A Mycielskian construction}

Throughout the paper, we use $[n]$ to denote the set $\{1,2,3,\ldots,n\}$.  The \textit{Mycielskian} of a graph $G$, with $V(G) = \{v_1, v_2, \ldots, v_n\}$, denoted $M(G)$, is defined as follows:
    \begin{compactitem}
        \item[] $V(M(G)) = V(G) \cup \{v'_1, v'_2, \ldots, v'_n\} \cup \{x\}$
        \item[] $E(M(G)) = E(G) \cup E' \cup X$ where
            \begin{compactenum}
            	\item $E' = \{v_iv'_j \,|\, v_iv_j \in E(G)\}$ and
        		\item $X = \{v'_ix \,|\, i \in [n]\}$.
            \end{compactenum}
    \end{compactitem}

This construction was introduced in \cite{mycielski1955coloriage}, the purpose of which was to demonstrate the existence of triangle-free graphs with arbitrarily large chromatic number (recall that the \textit{chromatic number} of a graph $G$, denoted $\chi(G)$, is the minimum number of colours needed to colour $V(G)$ so that no adjacent vertices receive the same colour).  In particular, Mycielski proved the following:

\begin{lem}\cite{mycielski1955coloriage}\label{Mchi}
For any graph $G$, $\chi(M(G)) = \chi(G)+1$. 
\end{lem}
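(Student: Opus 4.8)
The plan is to prove the two inequalities $\chi(M(G)) \le \chi(G)+1$ and $\chi(M(G)) \ge \chi(G)+1$ separately; throughout, set $k = \chi(G)$.

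For the upper bound I would start from a proper $k$-colouring $c$ of $G$ and lift it to $M(G)$ by assigning each shadow vertex $v'_i$ the colour $c(v_i)$ of its original, and giving the apex $x$ a fresh colour $k+1$. This uses $k+1$ colours, and verifying that it is proper amounts to checking each edge class in turn: edges of $E(G)$ are fine because $c$ is proper on $G$; an edge $v_iv'_j \in E'$ corresponds to $v_iv_j \in E(G)$, so $c(v_i) \neq c(v_j) = c(v'_j)$; and every edge $v'_ix \in X$ is safe since $x$ alone carries colour $k+1$. The shadow vertices form an independent set, so there is nothing to check among them.

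The substance of the argument is the lower bound, which I would establish by contradiction: suppose $M(G)$ admits a proper colouring $c$ using only $k$ colours. Since $x$ is adjacent to every shadow vertex, we may assume $c(x) = k$, and then no $v'_i$ receives colour $k$, so all shadow colours lie in $[k-1]$. The key step is to fold this colouring back onto $G$ by defining $c'(v_i) = c(v_i)$ whenever $c(v_i) \neq k$, and $c'(v_i) = c(v'_i)$ whenever $c(v_i) = k$. By construction $c'$ uses only colours from $[k-1]$, so if $c'$ turns out to be a proper colouring of $G$, we contradict $\chi(G) = k$, which forces $\chi(M(G)) \ge k+1$.

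Checking that $c'$ is proper on an edge $v_iv_j \in E(G)$ is where the Mycielskian structure does its work, and it is the step I expect to require the most care. The endpoints cannot both have had colour $k$, since that would already violate properness of $c$ on the edge $v_iv_j$; and if neither endpoint was recoloured, properness of $c'$ is inherited directly from $c$. The only delicate case is when exactly one endpoint, say $v_i$, was recoloured: here $c'(v_i) = c(v'_i)$ while $c'(v_j) = c(v_j)$, and the decisive observation is that $v_iv_j \in E(G)$ forces $v_jv'_i \in E'$, so $v'_i$ and $v_j$ are adjacent in $M(G)$ and therefore receive distinct colours under $c$. Combining this lower bound with the upper bound above yields $\chi(M(G)) = \chi(G)+1$.
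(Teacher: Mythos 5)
Your proof is correct; the paper itself gives no proof of this lemma, citing it directly from Mycielski's original 1955 article. Both inequalities are argued soundly — in particular, the delicate case of the lower bound (exactly one endpoint of an edge $v_iv_j$ recoloured) is handled correctly by observing that $v_jv'_i \in E'$ forces $c(v'_i) \neq c(v_j)$ — and this is the standard classical argument for the result.
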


The following simple lemma follows directly from the construction of $M(G)$:

\begin{lem}\label{Momega}
For any connected graph $G$ on at least two vertices, $\omega(M(G)) = \omega(G)$. 
\end{lem}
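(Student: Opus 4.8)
The plan is to prove the two inequalities $\omega(M(G)) \geq \omega(G)$ and $\omega(M(G)) \leq \omega(G)$ separately. The lower bound is immediate: none of the edges in $E'$ or $X$ is added among the original vertices $\{v_1, \ldots, v_n\}$, so these vertices induce a copy of $G$ inside $M(G)$. Hence every clique of $G$ is a clique of $M(G)$, and in particular a maximum clique of $G$ witnesses $\omega(M(G)) \geq \omega(G)$.

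For the reverse inequality, I would fix a maximum clique $K$ of $M(G)$ and argue by a short case analysis according to how $K$ meets the three parts of $V(M(G))$. First I would record two structural facts that follow directly from the construction: the shadow set $\{v'_1, \ldots, v'_n\}$ is independent, since no edge of $M(G)$ joins two primed vertices; and $x$ is adjacent only to primed vertices. Consequently $K$ contains at most one primed vertex, and if $x \in K$ then $K \subseteq \{x, v'_j\}$ for some $j$, so $|K| \leq 2$ in that case.

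The main step handles the case where $K$ avoids $x$ but contains exactly one primed vertex $v'_j$, via a ``lifting'' argument. Writing $K = \{v'_j\} \cup S$ with $S \subseteq V(G)$, the adjacency of $v'_j$ to each $v_i \in S$ forces $v_iv_j \in E(G)$ by the definition of $E'$; moreover $v_j \notin S$ because $G$ is loopless. It follows that $S \cup \{v_j\}$ is a clique of $G$ of size $|K|$, whence $|K| \leq \omega(G)$. The remaining case, in which $K$ contains neither $x$ nor any primed vertex, is trivial since then $K \subseteq V(G)$ is already a clique of $G$.

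Finally I would invoke the hypothesis that $G$ is connected on at least two vertices, which guarantees that $G$ has an edge and hence $\omega(G) \geq 2$; this absorbs the bound $|K| \leq 2$ arising from cliques through $x$. Combining all cases gives $|K| \leq \omega(G)$, completing the proof. The step I expect to require the most care is the lifting argument together with this last observation: it is precisely the need to dominate the bound $|K| \leq 2$ that makes the connectivity assumption essential, so I would be careful to state explicitly where it is used.
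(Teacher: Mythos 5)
Your proof is correct and is precisely the direct verification the paper has in mind — the paper offers no written proof, merely asserting that the lemma ``follows directly from the construction.'' Your case analysis (at most one primed vertex per clique, the lifting of $v'_j$ to $v_j$, and the use of connectivity to get $\omega(G)\geq 2$ and absorb cliques through $x$) correctly supplies all the details, including the one place where the hypothesis is genuinely needed.
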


Recall that a graph is called \textit{vertex-critical} if $\chi(G-v) < \chi(G)$ for every $v\in V(G)$.

\begin{lem}\label{Lem:Mvcritical}
    If $G$ is a vertex-critical graph, then $M(G)$ is vertex-critical.
\end{lem}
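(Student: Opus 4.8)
Let $k=\chi(G)$. By Lemma~\ref{Mchi} we have $\chi(M(G))=k+1$, and since deleting a vertex lowers the chromatic number by at most one, it suffices to produce, for every $w\in V(M(G))$, a proper $k$-colouring of $M(G)-w$; this gives $\chi(M(G)-w)\le k<k+1=\chi(M(G))$ and hence vertex-criticality. I would split the argument into three cases according to whether $w$ is the apex $x$, a shadow vertex $v'_i$, or an original vertex $v_i$. In the latter two cases the plan is to feed in vertex-criticality of $G$ through the fact that $G-v_i$ is $(k-1)$-colourable (this is exactly $\chi(G-v_i)<k$). Throughout I would exploit that a shadow $v'_j$ is adjacent among the original vertices to precisely the neighbours of $v_j$, so a shadow can safely "impersonate" its original.

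The two easy cases I expect to dispatch with explicit colourings. For $w=x$, take a proper $k$-colouring $c$ of $G$ and colour each $v_j$ and its shadow $v'_j$ by $c(v_j)$; since $v'_j$ only sees neighbours of $v_j$, this is proper, and no criticality is needed. For $w=v'_i$, take a proper $(k-1)$-colouring $c$ of $G-v_i$ on colours $\{1,\dots,k-1\}$, colour each surviving $v_j$ and shadow $v'_j$ ($j\ne i$) by $c(v_j)$, and spend the remaining colour $k$ on both $v_i$ and the apex $x$ (legitimate because $v_ix\notin E(M(G))$). A routine check of the three edge types ($G$-edges, edges $v_jv'_l$, and edges $xv'_l$) confirms properness on $k$ colours.

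The case $w=v_i$ is the one I expect to be the real obstacle, and it is where the global structure bites. After deleting $v_i$, \emph{all} $n$ shadows survive and the apex $x$ is adjacent to every one of them, so the naive strategy of copying $c$-colours onto the shadows can leave $N(x)$ using all $k$ colours and force an illegal $(k+1)$-st colour on $x$. The fix I would use is to collapse the shadows onto a single colour: starting from a proper $(k-1)$-colouring $c$ of $G-v_i$ on $\{1,\dots,k-1\}$, colour every shadow $v'_j$ with the fresh colour $k$ and give $x$ colour $1$. This is proper because each shadow's original-neighbours all lie in $G-v_i$ and hence avoid colour $k$, the shadows form an independent set, and $x$ (colour $1$) differs from the shadows (colour $k$), using $k\ge 2$; note $k\ge 2$ holds since $G$ has an edge, the standing assumption needed here. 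Having handled all three vertex types with $k$-colourings, the conclusion follows immediately from $\chi(M(G))=k+1$.
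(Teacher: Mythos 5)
Your proposal is correct and follows essentially the same approach as the paper's proof: the same three-case split on the deleted vertex, with vertex-criticality invoked only in the $v'_i$ and $v_i$ cases. Your use of a $(k-1)$-colouring of $G-v_i$ plus a fresh colour is just an equivalent reformulation of the paper's choice of a $k$-colouring of $G$ in which $c(v_i)$ is a unique colour, and your explicit remark that $k\ge 2$ is needed (the lemma degenerates for $G=K_1$) is a harmless point the paper leaves implicit.
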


\begin{proof}
    Let $k = \chi(G)$ and let $c$ be a minimum proper $k$-colouring of $G$.  We proceed by cases on the vertex $v$ to be deleted from $M(G)$, showing in each case that $\chi(M(G)-v) = k$.  If $v = x$, then we assign $c(v'_i) = c(v_i)$.  Suppose $v = v'_i$.  Since $G$ is vertex critical, we may assume that $c(v_i)$ is a unique colour in $G$.  In this case $c$ can be extended to $M(G) - v'_i$ by letting $c(v'_j) = c(v_j)$ for $j \neq i$ and $c(x) = c(v_i)$.  Finally, suppose that $v = v_i$, and that $c(v_i)$ is unique in $G$.  In this case, $c$ can be extended by assigning $c(v'_j) = c(v_i)$ for all $j$, and $c(x)$ may be given any colour used in colouring $V(G)$ which is distinct from $c(v_i)$.
\end{proof}

We now define a particular family of graphs obtained by the Mycielskian operation.  Let $k \geq 2$ be an integer and let $\{\mathcal{M}_k^l\}_{l \geq k}$ be a family of graphs defined recursively by $\mathcal{M}_k^k = K_k$ and for each integer $l > k$, $\mathcal{M}_k^{l} = M\left(\mathcal{M}_k^{l-1})\right)$.
By Lemmas \ref{Mchi} and \ref{Momega}, $\chi(\mathcal{M}_k^l) = l$ and $\omega(\mathcal{M}_k^l) = k$.  By applying simple induction, we also have the following:

\begin{lem}\label{Lem:vcritical}
    For each $2 \leq k \leq l$, $\mathcal{M}_k^l$ is vertex-critical.
\end{lem}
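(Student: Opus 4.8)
The plan is to prove the statement by a straightforward induction on $l$, with the inductive step handed to us almost entirely by Lemma \ref{Lem:Mvcritical}. For the base case $l = k$, I would verify directly that $\mathcal{M}_k^k = K_k$ is vertex-critical: since $\chi(K_k) = k$ and deleting any vertex leaves $K_{k-1}$, we have $\chi(K_k - v) = k - 1 < k$ for every $v$, which is exactly the defining condition of vertex-criticality.

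For the inductive step, I would fix $l > k$ and assume as the induction hypothesis that $\mathcal{M}_k^{l-1}$ is vertex-critical. The recursive definition gives $\mathcal{M}_k^l = M\left(\mathcal{M}_k^{l-1}\right)$, so applying Lemma \ref{Lem:Mvcritical} with $G = \mathcal{M}_k^{l-1}$ yields immediately that $M\left(\mathcal{M}_k^{l-1}\right) = \mathcal{M}_k^l$ is vertex-critical. This closes the induction and establishes the claim for all $2 \leq k \leq l$.

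I do not anticipate any genuine obstacle: all of the real content sits in Lemma \ref{Lem:Mvcritical}, which has already been proved, and the present statement is just its iterated application down the recursively defined chain. The only point deserving a moment's care is confirming that the hypotheses of the auxiliary lemmas are satisfied at each stage, but this is immediate, since $K_k$ is connected on at least two vertices for $k \geq 2$, and the Mycielskian of such a graph is again connected on at least two vertices, so the chromatic and clique-number facts $\chi(\mathcal{M}_k^l) = l$ and $\omega(\mathcal{M}_k^l) = k$ (from Lemmas \ref{Mchi} and \ref{Momega}) persist throughout the induction.
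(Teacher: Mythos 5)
Your proof is correct and matches the paper's approach exactly: the paper states this lemma follows "by applying simple induction," meaning precisely the base case $\mathcal{M}_k^k = K_k$ plus repeated application of Lemma \ref{Lem:Mvcritical} that you describe. No issues.
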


\section{Main result}

The construction of our family of counterexamples to Conjecture \ref{theconj} requires the following three lemmas.

\begin{lem}\label{Lem:Alpha-Choose} \cite{klostermeyer2007eternal}
	For any graph $G$, $\edom(G) \leq {\alpha(G)+1 \choose 2}$.
\end{lem}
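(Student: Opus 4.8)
The plan is to reduce the statement to the construction of an \emph{eternal dominating family}: a nonempty family $\mathcal{F}$ of dominating sets of $G$, each of cardinality at most $\binom{\alpha(G)+1}{2}$, enjoying the closure property that for every $D \in \mathcal{F}$ and every vertex $r \in V(G)$ there is some $D' \in \mathcal{F}$ containing $r$ and obtained from $D$ by moving a single guard along an edge (that is, $D' = (D \setminus \{u\}) \cup \{r\}$ for some $u \in D$ with $ur \in E(G)$, or $D' = D$ when $r \in D$). A standard argument shows that the existence of such a family is equivalent to $\edom(G) \le \binom{\alpha(G)+1}{2}$, so it suffices to build one.

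I would proceed by induction on $\alpha = \alpha(G)$, exploiting the identity $\binom{\alpha+1}{2} = \alpha + \binom{\alpha}{2} = \alpha + \binom{(\alpha-1)+1}{2}$, which suggests splitting the guards into $\alpha$ ``frontline'' guards and a reserve of $\binom{\alpha}{2}$ guards. In the base case $\alpha = 1$ the graph $G$ is complete, a single guard dominates $G$ and can answer any attack, so $\mathcal{F} = \{\{v\} : v \in V(G)\}$ works. For the inductive step I would have the frontline guards always occupy a maximal independent set of $G$ (which is automatically dominating and uses at most $\alpha$ guards), while the reserve guards realize, on a suitably chosen subgraph $G'$ with $\alpha(G') \le \alpha - 1$, the eternal dominating family guaranteed by the induction hypothesis. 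The family $\mathcal{F}$ is then assembled from all combinations of a frontline maximal independent set together with a compatible reserve configuration, and one must verify that a single guard move always answers an attack while keeping the combined configuration inside $\mathcal{F}$: an attack that a frontline guard can absorb while preserving a maximal independent set is handled there, and the remaining attacks are delegated to the reserve family.

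The main obstacle is the design of the recursion so that the single-guard-move constraint is respected globally. Concretely, I expect the difficulty to lie in choosing the subgraph $G'$ on which the reserves operate (so that $\alpha(G') \le \alpha - 1$, which drives the count $\binom{\alpha}{2}$) and in proving that the frontline and reserve families interlock: every attacked vertex must be answerable by moving exactly one guard, including attacks at vertices lying on the boundary between the region dominated by the frontline and the region entrusted to the reserves, and every resulting set must remain dominating. Verifying this closure property --- rather than any individual calculation --- is the crux, since once $\mathcal{F}$ is shown to be closed under single-move responses with all members of size at most $\binom{\alpha+1}{2}$, the bound follows immediately.
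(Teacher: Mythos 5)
The paper does not prove this lemma --- it cites it from Klostermeyer and MacGillivray --- so your proposal can only be judged on its own terms, and as it stands it is not a proof. Your opening reduction (to a nonempty family of dominating sets of size at most $\binom{\alpha+1}{2}$ closed under single-guard-move responses) is correct and standard, and the arithmetic skeleton $\binom{\alpha+1}{2} = \alpha + \binom{\alpha}{2}$ with induction on $\alpha$ is indeed the shape of the known argument. But everything that makes the theorem true is deferred: you never specify the subgraph $G'$, never state the invariant the guards maintain, and never verify that one move restores it. You say yourself that this is ``the crux''; a proof has to resolve the crux, not name it.

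Two of the concrete choices you do gesture at would fail. First, if the frontline occupies a maximal independent set $I$, there is no leftover territory on which to run the induction: $G - N[I]$ is empty, and the alternative $G - I$ need not have smaller independence number (any graph with two disjoint maximum independent sets, e.g.\ $C_6$, defeats this), so the reserve budget $\binom{\alpha}{2}$ is not justified by your decomposition. The standard way to force $\alpha$ to drop is to delete a single closed neighbourhood $N[v]$, not an independent set, and reconciling that with a frontline of $\alpha$ guards is precisely the nontrivial design problem. Second, the claim that an attack can be ``absorbed'' by a frontline guard ``while preserving a maximal independent set'' is false in the typical case: moving the guard from $u \in I$ to an attacked neighbour $r$ produces $(I \setminus \{u\}) \cup \{r\}$, which is not independent whenever $r$ has a second neighbour in $I$ and need not dominate $G$ even when it is independent. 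Handling exactly this degradation of the configuration --- and showing that a single move per attack suffices to keep the family closed --- is the entire content of the lemma, so the proposal has the right outline but a genuine gap where the proof should be.
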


\begin{lem} \cite{KM05} \label{Lem:GraphJoin}
	Let $G$ be a graph such that $\alpha(G)=a$, $\gamma^\infty(G)=g$, $\theta(G)=t$.  If $p$ is an integer such that $g \leq p \leq t$, then $\alpha(G \vee \overline{K_p})=p$, $\gamma^\infty(G \vee \overline{K_p})=p$ and $\theta(G \vee \overline{K_p})=t$.
\end{lem}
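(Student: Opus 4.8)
The plan is to prove the three equalities in turn, disposing of $\alpha$ and $\theta$ quickly and devoting the real effort to the eternal domination number. Throughout write $W$ for the $p$ vertices of $\overline{K_p}$; in $G \vee \overline{K_p}$ these are pairwise non-adjacent and each is adjacent to every vertex of $V(G)$.

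For the independence number, I would note that any independent set meets at most one of $V(G)$ and $W$, since a vertex of each would be joined by an edge of the join; hence $\alpha(G \vee \overline{K_p}) = \max\{\alpha(G), p\}$, which equals $p$ because $a \le g \le p$ (using $\alpha \le \gamma^\infty$ from the introduction). For the clique cover number, the bound $\theta(G \vee \overline{K_p}) \le t$ follows by taking a minimum clique cover $\{X_1, \dots, X_t\}$ of $G$ and, since $p \le t$, appending the $p$ vertices of $W$ to $p$ distinct cliques among the $X_i$; each $X_i \cup \{w\}$ is still a clique because $w$ is adjacent to all of $V(G)$. Conversely, intersecting any clique cover of $G \vee \overline{K_p}$ with $V(G)$ produces a clique cover of $G$ of no larger size (a subset of a clique is a clique), so $\theta(G \vee \overline{K_p}) \ge \theta(G) = t$. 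Both of these are routine.

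For $\gamma^\infty$ the lower bound is immediate, since $\gamma^\infty(G \vee \overline{K_p}) \ge \alpha(G \vee \overline{K_p}) = p$. Everything therefore reduces to exhibiting a guarding strategy that defends $G \vee \overline{K_p}$ with only $p$ guards. Here I would use the standard reformulation of $\gamma^\infty(G) = g$: there is a family $\mathcal D$ of $g$-element dominating sets of $G$ such that from any $D_0 \in \mathcal D$ and any attacked vertex $u \notin D_0$ there is a member of $\mathcal D$ of the form $(D_0 \setminus \{t\}) \cup \{u\}$ with $t u \in E(G)$. The key structural observation is that a guard sitting on $W$ is adjacent to every vertex of $V(G)$, so it is a \emph{wildcard} that can occupy any vertex of $G$ in a single move. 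Writing $D$ and $S$ for the sets of vertices currently occupied by guards in $V(G)$ and in $W$ respectively, I would maintain the invariant that some target $D_0 \in \mathcal D$ satisfies $|D_0 \setminus D| \le |S|$, i.e.\ there are enough wildcards on $W$ to fill the unoccupied target vertices. First one checks that this invariant forces the configuration to dominate $G \vee \overline{K_p}$. Then one verifies it survives every attack: an attack on $W$ is answered by sliding a guard from $V(G)$ onto the attacked apex (this raises $|S|$ by one and raises $|D_0 \setminus D|$ by at most one); an attack on a non-target vertex $u \in V(G)$ is answered by switching to the target $(D_0 \setminus \{t\}) \cup \{u\}$ given by $\mathcal D$, moving the guard on $t$ if it is present and otherwise spending a wildcard; and an attack on an unoccupied target vertex is answered by spending a wildcard to realize it. In each case the slack $|S| - |D_0 \setminus D|$ stays nonnegative. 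Starting the guards on a set $D_0 \in \mathcal D$ together with $p - g$ spares makes the invariant hold initially.

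The main obstacle is this upper bound, and within it the tight case $p = g$, where there are no spare guards: a run of attacks on $W$ can force every guard off $V(G)$, and a careless strategy would then let the attacker drive the guards back onto an arbitrary, possibly non-dominating, subset of $V(G)$. The whole point of tracking a target $D_0 \in \mathcal D$ through the inequality $|D_0 \setminus D| \le |S|$ is to guarantee that a genuine member of the eternal family can always be reconstituted, so the real work is setting up this invariant and checking its case analysis; the $\alpha$ and $\theta$ computations and the $\alpha$-based lower bound are comparatively immediate.
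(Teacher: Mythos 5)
Your proposal is correct, but there is nothing in the paper to compare it against: Lemma \ref{Lem:GraphJoin} is imported from \cite{KM05} and stated without proof, so you have in effect supplied the missing argument yourself. Your treatment of $\alpha$ and $\theta$ is exactly the routine computation one would expect (the only points worth stating explicitly are that a clique of $G \vee \overline{K_p}$ meets $W$ in at most one vertex, so restricting a clique cover to $V(G)$ loses nothing, and that $a \le p$ comes from $\alpha \le \gamma^\infty \le p$). For the eternal domination upper bound, your invariant $|D_0 \setminus D| \le |S|$ is the right bookkeeping device, and the case analysis closes: an attack on $W$ forces $D \neq \emptyset$ (else $S = W$ and no vertex of $W$ is unoccupied), so a guard of $V(G)$ can always answer it while the slack absorbs the possible loss of a target vertex; an attack on $u \in D_0 \setminus D$ consumes one wildcard and one deficit unit simultaneously; and an attack on $u \notin D_0$ is handled by switching targets, with the wildcard available exactly when $t \notin D$ because then $t$ itself contributes to the deficit. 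The one step you wave at --- that the invariant forces domination --- does need the two boundary cases spelled out ($S = \emptyset$ gives $D \supseteq D_0$, which dominates $V(G)$ and, being nonempty, dominates $W$; $D = \emptyset$ forces $S = W$), but both go through. This is a complete and self-contained proof of the cited lemma.
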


\begin{lem} \label{Lem:G-Box-K2}
	Let $G$ be a graph and let $q_X$ denote the number of cliques of cardinality $1$ in a clique cover $X$ of $G$. If $q$ is the maximum value of $q_X$ taken over all clique covers of size $\theta(G)$, then $\theta(G \Box K_2)=2 \theta(G)-q$.
\end{lem}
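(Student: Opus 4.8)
The plan is to first determine exactly what the cliques of the prism look like. Write the two copies of $G$ as $G_1,G_2$ and the matching edges (``rungs'') as $\{(v,1),(v,2)\}$. If a clique of $G\Box K_2$ contained two vertices $(u,1),(w,1)$ of $G_1$ together with some vertex $(x,2)$ of $G_2$, the Cartesian-product adjacency rule would force $x=u$ and $x=w$, a contradiction; so every clique of $G\Box K_2$ is contained in $G_1$, contained in $G_2$, or is a single rung. This structural fact is the backbone of both bounds.

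For the upper bound I would start from a minimum clique cover $X$ of $G$ attaining $q$ singletons, replace each singleton $\{v\}$ by the rung $\{(v,1),(v,2)\}$, and replace each larger clique $C$ by its two copies $C\times\{1\}$ and $C\times\{2\}$; this is a clique cover of the prism of size $q+2(\theta(G)-q)=2\theta(G)-q$. More generally, rungs on any $S\subseteq V(G)$ together with a doubled minimum cover of $G-S$ give $\theta(G\Box K_2)\le 2\theta(G-S)+|S|$. For the lower bound I take a minimum clique cover $\mathcal{C}$ of $G\Box K_2$; by the structural fact it splits into $a_1$ cliques inside $G_1$, $a_2$ cliques inside $G_2$, and $m$ rungs on a set $M\subseteq V(G)$. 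Projecting the $G_1$-cliques to $G$ covers $V(G)\setminus M$, so with the singletons $\{v\}$ ($v\in M$) they form a clique cover of $G$; hence $a_1\ge\theta(G-M)$, and symmetrically $a_2\ge\theta(G-M)$, giving $\theta(G\Box K_2)=a_1+a_2+m\ge 2\theta(G-M)+|M|$. Together these establish the clean identity $\theta(G\Box K_2)=\min_{S\subseteq V(G)}\bigl(2\theta(G-S)+|S|\bigr)$, and the lemma reduces to the purely combinatorial statement $\min_S\bigl(2\theta(G-S)+|S|\bigr)=2\theta(G)-q$.

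Here the inequality $\min_S(\cdots)\le 2\theta(G)-q$ is immediate: taking $S$ to be the $q$ singleton-vertices of an optimal cover of $G$, the remaining $\theta(G)-q$ cliques cover $G-S$, so $2\theta(G-S)+|S|\le 2(\theta(G)-q)+q=2\theta(G)-q$. The reverse inequality, $2\theta(G-S)+|S|\ge 2\theta(G)-q$ for every $S$, carries the real content; writing $\delta(S)=\theta(G)-\theta(G-S)$ it is equivalent to $q\ge 2\delta(S)-|S|$. My approach rests on two ingredients: (i) the elementary count that a minimum clique cover of any graph $\Gamma$ on $N$ vertices, viewed as a partition, has at least $2\theta(\Gamma)-N$ singletons, since with $s$ singletons the other $\theta(\Gamma)-s$ cliques have size at least $2$ and so $N\ge s+2(\theta(\Gamma)-s)$; and (ii) the claim that $G$ has a minimum clique cover containing exactly $\delta(S)$ cliques entirely inside $S$ (equivalently, one whose restriction to $G-S$ is a minimum cover of $G-S$). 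Granting (ii), applying the count of (i) to just those $\delta(S)$ inside-$S$ cliques shows they contain at least $2\delta(S)-|S|$ singletons of the cover, whence $q\ge 2\delta(S)-|S|$.

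The main obstacle is exactly part (ii). One direction is easy: restricting any minimum cover of $G$ to $G-S$ deletes precisely the cliques inside $S$ and still covers $G-S$, so at most $\delta(S)$ cliques lie inside $S$; the content is the matching lower bound. I expect to prove it by induction on $|S|$: adjoining a vertex $v$ to $S$ either leaves $\delta$ unchanged, in which case a cover supplied by the inductive hypothesis already has enough cliques inside the larger set, or raises $\delta$ by one, which forces $v$ to be a vertex whose deletion drops $\theta$, and for such a critical $v$ one can rearrange a minimum cover so that $\{v\}$ becomes an additional whole clique inside $S$. Controlling the \emph{straddling} cliques (those meeting both $S$ and its complement) during this rearrangement is the delicate point, and is where I anticipate concentrating the effort.
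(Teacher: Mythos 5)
Your structural analysis of the prism is correct, and the identity $\theta(G \Box K_2)=\min_{S\subseteq V(G)}\bigl(2\theta(G-S)+|S|\bigr)$ you extract from it is sound: every clique of the prism lies in one layer or is a rung, the rung set $M$ forces each layer to contribute at least $\theta(G-M)$ cliques, and your construction gives the matching upper bound. The gap is in the final step. Your claim (ii) is false: in the path $abcd$ with $S=\{a,d\}$ one has $\delta(S)=\theta(P_4)-\theta(bc)=2-1=1$, yet the unique minimum clique cover $\{\{a,b\},\{c,d\}\}$ contains no clique inside $S$ (equivalently, its restriction to $G-S$ is $\{\{b\},\{c\}\}$, which is not a minimum cover of the edge $bc$). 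So the induction you plan cannot go through as stated.

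Worse, the identity you reduced to, $\min_{S}\bigl(2\theta(G-S)+|S|\bigr)=2\theta(G)-q$, is itself false, so no repair of (ii) will close the gap; in fact your (correct) $\min_S$ formula refutes the lemma as stated. Let $G$ consist of two disjoint triangles $\{a_1,a_2,a_3\}$ and $\{b_1,b_2,b_3\}$ together with three further vertices $w_1,w_2,w_3$, where $w_i$ is adjacent exactly to $a_i$ and $b_i$. Then $\{w_1,w_2,a_3,b_3\}$ is independent and $\{\{w_1,a_1\},\{w_2,a_2\},\{w_3,a_3\},\{b_1,b_2,b_3\}\}$ is a cover, so $\theta(G)=4$; since the $w_i$ are pairwise non-adjacent and each lies in no clique of size greater than $2$, any cover of size $4$ must use $T_1$ or $T_2$ as its fourth clique and three edges $\{w_i,\cdot\}$ for the rest, so every minimum cover has $q=0$ singletons and the lemma predicts $\theta(G\Box K_2)=8$. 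But $S=\{w_1,w_2,w_3\}$ gives $2\theta(G-S)+|S|=2\cdot 2+3=7$: the four triangles $T_j\times\{i\}$ plus the three rungs at the $w_i$ cover the prism with $7$ cliques. Only the inequality $\theta(G\Box K_2)\le 2\theta(G)-q$ survives (your upper-bound construction, which is also the paper's); the paper's own lower-bound argument implicitly assumes the projected covers of size $a_i+m$ are minimum, which fails here since $m=3>q=0$. The correct equality is your $\min_S$ formula, and the discrepancy matters because the main theorem of the paper invokes precisely the lower-bound direction of this lemma.
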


\begin{proof}
	Without loss of generality, suppose the vertex set of $K_2$ is $\{1,2\}$. Let $\{C_1, C_2, \ldots, C_k\}$ be a clique cover of $G$ and let $C_j^i=\{(v,i): i \in [2], j \in [k], v \in C_j\}$. For any $i \in [2]$, if $G_i$ is the subgraph induced by the vertices of the set $V(G) \times \{i\}$ then $\{C_1^i, C_2^i, \ldots, C_k^i\}$ is a clique cover of $G_i$. For any $j \in [k]$, if $C_j=\{v\}$ then $\overline{C_j}=\{(v, 1), (v, 2)\}$ is a clique of $G \Box K_2$ and covers the vertices $(v, 1)$ and $(v, 2)$. Without loss of generality, suppose each of the cliques $C_1, C_2, \ldots, C_q$ covers a unique vertex and each of the cliques $C_{q+1}, C_{q+2}, \ldots, C_k$ have at least two vertices. The set $\{\overline{C_1}, \overline{C_2}, \ldots, \overline{C_q}\} \cup \{C_i^j: i \in \{1, 2\}, q+1 \leq j \leq k\}$ covers $G \Box K_2$, hence $\theta(G \Box K_2) \leq 2 \theta(G)-q$. Observe that a clique of $G \Box K_2$ is contained in either $G_1$ or $G_2$ or the clique is of the form $\overline{C}=\{(v, 1), (v, 2)\}$. The cliques contained in $G_1$ (respectively $G_2$) with the vertices $(v, 1)$ (respectively $(v, 2)$) in $\overline{C}$ cover $G_1$ (respectively $G_2$). Therefore, we then have the inverse inequality.
\end{proof}

We now use the construction of $\mathcal{M}_k^l$ to obtain our infinite family of counterexamples, which may have arbitrarily large independence number, eternal domination number, and clique cover number.

\begin{thm}
	For any integer $k \geq 2$, there exists a graph $G$ such that $\alpha(G)=\gamma^\infty(G)=\theta(G)={k+1 \choose 2}+1$ and $\alpha(G \Box K_2)=\gamma^\infty(G \Box K_2)<\theta(G \Box K_2)$.
\end{thm}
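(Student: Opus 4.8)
The plan is to build $G$ as a join $H \vee \overline{K_p}$, where $H$ is one of the complemented Mycielskian graphs and $p$ is chosen to collapse the chain $\alpha \le \gamma^\infty \le \theta$ for $G$ while forcing a strict gap in the prism. Concretely, set $l = \binom{k+1}{2}+1$, let $H = \overline{\mathcal{M}_k^l}$ with vertices of $\overline{K_p}$ written $u_1,\dots,u_p$, and put $p = l$. Since $\theta(\overline{F}) = \chi(F)$ and $\alpha(\overline{F}) = \omega(F)$, the identities $\chi(\mathcal{M}_k^l)=l$ and $\omega(\mathcal{M}_k^l)=k$ give $\theta(H)=l$ and $\alpha(H)=k$. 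First I would establish the first equality chain: because $\gamma^\infty(H) \le \theta(H) = l = p$, the hypothesis $g \le p \le t$ of Lemma~\ref{Lem:GraphJoin} holds with $t = p$, so $\alpha(G) = \gamma^\infty(G) = p = \theta(H) = \theta(G)$, i.e.\ all three equal $\binom{k+1}{2}+1$. (Lemma~\ref{Lem:Alpha-Choose} is not needed here, but reappears below to quantify the slack $\gamma^\infty(H) \le \binom{k+1}{2} = p-1$.)

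Next I would compute the two prism parameters that bracket $\gamma^\infty(G \Box K_2)$. For $\theta$, I would check that every minimum clique cover of $G$ avoids singletons: since the $p$ vertices of $\overline{K_p}$ are pairwise nonadjacent, each of the $\theta(G)=p$ cliques meets $\{u_1,\dots,u_p\}$ in exactly one vertex, and the induced partition of $V(H)$ into $p=\theta(H)$ cliques forces every clique to have size at least two. Hence $q=0$ in Lemma~\ref{Lem:G-Box-K2}, giving $\theta(G\Box K_2)=2p = 2\binom{k+1}{2}+2 = k^2+k+2$. For $\alpha$, I would use that an independent set of $G$ either lies in $\{u_1,\dots,u_p\}$ (size $\le p$) or in $V(H)$ (size $\le \alpha(H)=k$), because each $u_i$ is adjacent to all of $H$; since $\alpha(G \Box K_2)$ is the maximum of $|S_1|+|S_2|$ over disjoint independent sets $S_1,S_2$ of $G$, the optimum takes $\{u_1,\dots,u_p\}$ in one copy together with a maximum independent set of $H$ in the other, yielding $\alpha(G\Box K_2)=p+k=\binom{k+2}{2}$.

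The heart of the argument is to show $\gamma^\infty(G\Box K_2)=\binom{k+2}{2}$. The lower bound is immediate from $\alpha \le \gamma^\infty$. For the upper bound I would exhibit a strategy with $p+k$ guards, exploiting two structural features: (i) a single guard on any $u$-vertex of one side floods (dominates) the entire $H$-copy on that side, and a single guard inside an $H$-copy dominates every $u$-vertex on that side, so domination is extremely robust; and (ii) the matching edges let guards migrate between the two copies. I would maintain a dominating configuration in which every matched pair $\{u_i^{(1)},u_i^{(2)}\}$ carries a guard and the $k$ surplus guards occupy a maximum (hence dominating) independent set of $H$ in one copy; attacks on $u$-vertices or on the flooded copy are answered by sliding a flooding guard or a matching partner, while attacks on the other copy are absorbed by the surplus guards with reinforcements shuttled across the matching. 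The quantitative input making this balance feasible is Lemma~\ref{Lem:Alpha-Choose}: since $\gamma^\infty(H)\le\binom{k+1}{2}=p-1$, a single $H$-copy can be stabilized with strictly fewer than $p$ guards, leaving enough of the budget to service the other copy.

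The main obstacle is precisely this last step: turning the informal flood-and-shuttle idea into a family of configurations that is provably closed under (attack, response) with only $p+k$ guards. The delicate point is that each copy, in isolation, induces a copy of $G$ and so needs $p$ guards to defend, whereas $p+k<2p$; correctness hinges on never being forced to hold $p$ guards in both copies at once, which is where the distinctness of guard positions (ruling out over-concentrated configurations once $k\ge 2$) and the slack $\gamma^\infty(H)\le p-1$ must be combined in a careful case analysis. Granting the upper bound, we conclude $\alpha(G\Box K_2)=\gamma^\infty(G\Box K_2)=\binom{k+2}{2}$, and since $\binom{k+2}{2}=\tfrac{(k+1)(k+2)}{2}<k^2+k+2=\theta(G\Box K_2)$ for every $k\ge 2$, the graph $G$ satisfies $\gamma^\infty(G)=\theta(G)$ yet $\gamma^\infty(G\Box K_2)<\theta(G\Box K_2)$, giving the desired infinite family of counterexamples.
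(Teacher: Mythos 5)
There is a genuine gap, and it sits exactly where you flagged it. Your construction ($G = \overline{\mathcal{M}_k^l} \vee \overline{K_p}$ with $p = l = \binom{k+1}{2}+1$) correctly delivers $\alpha(G)=\gamma^\infty(G)=\theta(G)=p$ via Lemma~\ref{Lem:GraphJoin}, and your argument that every minimum clique cover of $G$ is singleton-free (each of the $p$ cliques must contain exactly one $u_i$ and at least one vertex of $H$) correctly gives $\theta(G \Box K_2)=2p$ via Lemma~\ref{Lem:G-Box-K2}. But the entire counterexample rests on producing an upper bound for $\gamma^\infty(G \Box K_2)$ that is strictly below $2p$, and for this you offer only the informal flood-and-shuttle sketch, which you yourself acknowledge you cannot close. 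Without that step the only bound available is the trivial $\gamma^\infty(G\Box K_2) \le 2\gamma^\infty(G) = 2p = \theta(G\Box K_2)$, so the strict inequality --- the whole point of the theorem --- is not established. (You are also aiming at the much stronger claim $\gamma^\infty(G\Box K_2)=\alpha(G\Box K_2)=p+k$, which is harder still; note that even a correct defence strategy need not achieve the independence-number lower bound.)

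The paper sidesteps this difficulty by engineering $G$ so that the upper bound becomes a one-line partition argument: it takes $H^* = H \vee \overline{K}_{\binom{k+1}{2}}$ (join with \emph{one fewer} isolated vertex, so that Lemmas~\ref{Lem:Alpha-Choose} and~\ref{Lem:GraphJoin} give $\gamma^\infty(H^*)=\binom{k+1}{2}=\theta(H^*)-1$) and then restores $\gamma^\infty=\theta$ by attaching a pendant vertex $x$ to a suitable $w$. In $G \Box K_2$ the vertex set then partitions into two induced copies of $H^*$ and the edge $\{(x,1),(x,2)\}$, each defensible independently, so $\gamma^\infty(G\Box K_2) \le 2\binom{k+1}{2}+1 < 2\binom{k+1}{2}+2 = \theta(G\Box K_2)$. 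Your construction can be repaired in the same spirit without the pendant vertex: peel off $u_p$, note that $G - u_p = H \vee \overline{K}_{p-1}$ has $\gamma^\infty = p-1$ (Lemma~\ref{Lem:Alpha-Choose} gives $\gamma^\infty(H)\le p-1$, so Lemma~\ref{Lem:GraphJoin} applies), and that $\{(u_p,1),(u_p,2)\}$ induces a $K_2$; this yields $\gamma^\infty(G\Box K_2)\le 2(p-1)+1 = 2p-1 < 2p$, which suffices for the strict inequality, though not for the equality with $\alpha(G\Box K_2)$ that you (and the theorem statement) assert.
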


\begin{proof}
	Let $k$ be an integer greater than or equal to $2$, let $M = \mathcal{M}_k^{{k+1 \choose 2}+1}$, and let $H = \overline{M}$.  Since $\chi(M) = {k+1 \choose 2}+1$ and $\omega(M) = k$, we have that $\alpha(H)=k$ and $\theta(H)={k+1 \choose 2}+1$.  Since $M$ is vertex-critical (Lemma \ref{Lem:vcritical}), for any given vertex $w \in V(M)$ there exists a proper $\chi$-coloring of $V(M)$ such that $w$ receives a unique color.   As a consequence, for any given vertex $w \in H$, there exists a clique covering of $H$ by a minimum number of cliques where the clique which contains $w$ is of cardinality $1$.  By Lemma \ref{Lem:Alpha-Choose}, we have that $\edom(H) \leq {k+1 \choose 2}$.  Let $H^*=H \vee \overline{K}_{k+1 \choose 2}$; by Lemma \ref{Lem:GraphJoin}, $\alpha(H^*)=\gamma^\infty(H^*)={k+1 \choose 2}$ and $\theta(H^*)={k+1 \choose 2}+1$. Observe that for any given vertex $w \in V(H^*) \cap V(H)$, there exists a minimum clique covering of $H^*$ such that the clique which contains $w$ is of cardinality $1$, since each vertex of $\overline{K}_{k+1 \choose 2}$ can be added to a distinct clique of $H$ not containing $w$.  Fix $w$ to be any vertex in $H$, and note that $H^*$ has a maximum independent set $I$ for which $w \notin I$.  Finally, let $G$ be the graph obtained from $H^*$ by adding a single vertex $x$ adjacent only to $w$.   Then, $\alpha(G)=\alpha(H^*)+1={k+1 \choose 2}+1$.  By taking the minimum clique cover of $H^*$ for which $w$ is the only singleton, we may extend it to a clique cover of $G$ containing no singletons by replacing $\{w\}$ with $\{w,x\}$; by Lemma \ref{Lem:G-Box-K2}, this implies that $\theta(G \Box K_2)=2 {k+1 \choose 2}+2$.  On the other hand, denoting $V(K_2) = \{1,2\}$, ${k+1 \choose 2}$ guards can protect the subgraph induced by the vertices in the set $\{(v, 1): v \in V(H^*)\}$, ${k+1 \choose 2}$ guards can protect the subgraph induced by the vertices in the set $\{(v, 2): v \in V(H^*)\}$ and one guard can protect the vertices $(x, 1)$ and $(x, 2)$.  Thus $\edom(G \Box K_2) \leq 2 {k+1 \choose 2}+1 < \theta(G \Box K_2)$.
\end{proof}

\section{Acknowledgements}

Partial financial support for this work was received from the Fonds de recherche du Qu\'ebec-- Nature et technologies and from the Natural Sciences and Engineering Research Council of Canada.

%% The Appendices part is started with the command \appendix;
%% appendix sections are then done as normal sections
%% \appendix

%% \section{}
%% \label{}

%% References
%%
%% Following citation commands can be used in the body text:
%% Usage of \cite is as follows:
%%   \cite{key}          ==>>  [#]
%%   \cite[chap. 2]{key} ==>>  [#, chap. 2]
%%   \citet{key}         ==>>  Author [#]

%% References with bibTeX database:

% \section*{References}
\bibliographystyle{siam}
{\footnotesize \bibliography{sample}}

%% Authors are advised to submit their bibtex database files. They are
%% requested to list a bibtex style file in the manuscript if they do
%% not want to use model1-num-names.bst.

%% References without bibTeX database:

% \begin{thebibliography}{00}

%% \bibitem must have the following form:
%%   \bibitem{key}...
%%

% \bibitem{}

% \end{thebibliography}

\end{document}